\documentclass[10pt,letterpaper]{article}
\usepackage[letterpaper]{geometry}
\usepackage{hicss}
\usepackage{times}
\usepackage[none]{hyphenat}
\usepackage{url}
\usepackage{latexsym}
\usepackage{minted}
\usepackage{indentfirst}
\usepackage{graphicx}
\graphicspath{{images/}}

\usepackage{amsmath,amssymb,amsthm,epsfig,color,subfigure,empheq,graphicx,graphics,balance}
\usepackage{enumerate,url,algorithm,algorithmic,wasysym,epstopdf,enumitem}
\usepackage{xcolor}
\usepackage{amsfonts}
\usepackage{adjustbox}
\usepackage{multirow}
\usepackage{accents}
\usepackage{placeins}

\DeclareMathOperator{\real}{Re}
\DeclareMathOperator{\imag}{Im}

\newtheorem{lemma}{Lemma}

\newcommand \bzero{\mathbf{0}}
\newcommand \bone{\mathbf{1}}
\newcommand \ba{\mathbf{a}}
\newcommand \bb{\mathbf{b}}
\newcommand \bc{\mathbf{c}}

\newcommand \be{\mathbf{e}}

\newcommand \bs{\mathbf{s}}

\newcommand \bv{\mathbf{v}}
\newcommand \bw{\mathbf{w}}
\newcommand \bx{\mathbf{x}}

\newcommand \bz{\mathbf{z}}
\newcommand \bA{\mathbf{A}}

\newcommand \bE{\mathbf{E}}

\newcommand \bH{\mathbf{H}}
\newcommand \bI{\mathbf{I}}
\newcommand \bJ{\mathbf{J}}
\newcommand \bK{\mathbf{K}}

\newcommand \bM{\mathbf{M}}

\newcommand \bP{\mathbf{P}}
\newcommand \bQ{\mathbf{Q}}

\newcommand \bU{\mathbf{U}}
\newcommand \bV{\mathbf{V}}
\newcommand \bW{\mathbf{W}}

\newcommand \bY{\mathbf{Y}}

\newcommand \balpha{\boldsymbol{\alpha}}
\newcommand \bbeta{\boldsymbol{\beta}}

\newcommand \bdelta{\boldsymbol{\delta}}

\newcommand \bsigma{\boldsymbol{\sigma}}

\newcommand \bDelta{\mathbf{\Delta}}

\newcommand \bLambda{\mathbf{\Lambda}}

\newcommand \mcC{\mathcal{C}}

\newcommand \mcG{\mathcal{G}}

\newcommand \mcL{\mathcal{L}}

\newcommand \mcN{\mathcal{N}}
\newcommand \mcO{\mathcal{O}}
\newcommand \mcP{\mathcal{P}}

\newcommand \mcR{\mathcal{R}}

\newcommand \mcV{\mathcal{V}}

\newcommand \tbv{\tilde{\mathbf{v}}}

\newcommand \cbdelta{\check{\boldsymbol{\delta}}}

\newcommand \bbv{\bar{\mathbf{v}}}

\newcommand \bbY{\bar{\mathbf{Y}}}

\newcommand \bbDelta{\bar{\mathbf{\Delta}}}

\title{AC Power Flow Contingency Analysis Using a Single Deep Neural Network}

\author{Md Obaidur Rahman, Junjie Qin, and Vassilis Kekatos\\
  Elmore Family School of Electrical and Computer Engineering \\
Purdue University, West Lafayette, IN 47906\\
 {\underline{ rahma160, jq, kekatos@purdue.edu}}} 

\begin{document}
\maketitle

\begin{abstract}
Contingency analysis using the AC power flow (AC-PF) model is a critical tool for accurate grid security assessment, but its computational burden increases with the number of operating scenarios and outage configurations to evaluate. Recent ML-based approaches typically require outage-specific training data, leading to offline training costs that scale with the number of contingencies. This work proposes a framework that reuses a single ML model trained solely on basecase AC-PF data to estimate post-contingency operating states under arbitrary single-line outages. The proposed approach formulates post-contingency state prediction as a fixed-point iteration. If the ML model is a deep neural network (DNN), we derive sufficient conditions that guarantee convergence and develop semidefinite programming (SDP) formulations to certify these conditions for a given DNN. Numerical tests on the IEEE 118-bus system demonstrate that the proposed SDP formulations are tight, that the certified conditions hold for all tested contingencies, and that the resulting method produces accurate post-contingency state estimates within only a few iterations.
\end{abstract}


\allowdisplaybreaks

\section{Introduction}
Contingency analysis is a fundamental tool for ensuring the secure and reliable operation of power systems and for evaluating future grid expansion plans. The need for scalable contingency analysis is growing as rapid load growth, renewable integration, transportation electrification, and large data centers expand the range of operating conditions that must be assessed. Moreover, corrective switching and topology optimization may lead to more dynamic network configurations. Since contingency analysis relies on repeated power-flow computations across many operating scenarios and outage configurations, its computational burden can become substantial. In this context, we propose a novel framework for predicting post-contingency power system states using a single machine-learning model that can be reused across multiple outage scenarios.

To alleviate the computational burden of AC contingency analysis, operators often rely on sensitivity-based approximations, such as power-transfer and line-outage distribution factors (PTDFs and LODFs)~\cite{sauer1981formulation,wood2013power}. Numerous works have improved the efficiency of these methods through direct LODF calculations, graph-theoretic formulations, and extensions to multiple-line outages~\cite{guo2009direct,castelli2024improved,guler2007generalized}. However, they cannot accurately capture voltage violations, reactive power redistribution, network losses, or apparent power limit violations. To improve accuracy, several works have developed AC sensitivity factors and linearized AC-PF models~\cite{qu2015uncertainty,dhople2015linear,coffrin2014linear}, yet their accuracy deteriorates under topology changes or major shifts in the operating point, limiting their applicability to screening and approximate contingency analysis.

Another line of work seeks to accelerate exact AC contingency analysis. Warm-started Newton--Raphson methods initialize the post-contingency solve using the basecase voltage profile, thereby reducing the number of iterations required for convergence \cite{yu2024efficient}. Other approaches exploit the fact that a single-line outage induces a low-rank modification of the AC-PF Jacobian. Using the matrix inversion lemma, these methods update the inverse of the basecase Jacobian using outage information rather than recomputing the solution from scratch~\cite{yeung2017amps,lo2004newton}. Holomorphic embedding methods employ analytic continuation to determine whether a post-contingency solution exists and to avoid Newton--Raphson divergence near stressed operating points \cite{yao2021contingency}, but their complexity can still be a concern.

These challenges have motivated the development of ML-based surrogates for power-system computations. DNNs and related ML models have been successfully applied to a variety of power-system tasks, such as the optimal power flow, emergency control, and security assessment~\cite{zamzam2020learning,pan2021deepopf, singh2022learning,zhou2022scalable,gupta2023optimal,chen2020learning,kody2023learning}. For the AC-PF and contingency analysis, physics-informed neural networks incorporate AC-PF residuals, Kirchhoff's law constraints, or network topology information during training to improve voltage predictions \cite{hu2021physics,eeckhout2024improved}. Other works train ML models to predict post-contingency voltages, line loadings, and security indices directly from pre-contingency operating conditions \cite{du2019achieving}. To account for topology changes, graph neural networks explicitly encode buses and transmission lines for contingency prediction and screening \cite{nakiganda2025graph}, while input-convex DNNs have been proposed for reliable $N-k$ contingency screening with guarantees against false negatives \cite{christianson2025fast}.

Despite recent progress, existing ML-based approaches for contingency analysis often require outage-specific training data, increasing offline costs for data generation and model training as the number of contingencies grows. To overcome this limitation, we propose an ML-based fixed-point framework for AC contingency analysis. The contributions are threefold: \emph{c1)} We derive a formulation that encodes line outages as input perturbations, enabling a single DNN trained on the basecase topology to be reused across contingencies; \emph{c2)} We establish sufficient conditions guaranteeing convergence of the iterative scheme; and \emph{c3)} We develop SDP formulations for certifying these conditions. Numerical tests on the IEEE 118-bus system demonstrate that the proposed SDPs derive tight bounds, that the certified conditions hold for all tested single-line contingencies, and that the proposed method produces accurate post-contingency state estimates within only a few iterations.

The rest of the paper is organized as follows. Section~\ref{sec:basecase_dnn} reviews the AC-PF model and basecase DNN surrogate. Section~\ref{sec:method} introduces the fixed-point framework for single-line contingencies. Section~\ref{sec:certify} derives convergence conditions, while Section~\ref{sec:sdp} develops SDP-based bounds to certify them. Section~\ref{sec:tests} evaluates the method on the IEEE 118-bus system.

\section{ML Surrogate of an AC-PF Solver}
\label{sec:basecase_dnn}
This section reviews the AC power flow (AC-PF) problem and explains how a deep neural network (DNN) can be trained to predict the solutions of an AC-PF solver under the basecase power system topology.

An AC power transmission system can be represented by a graph $\mcG=(\mcN,\mcL_0)$, where $\mcN:=\{1,\ldots,N\}$ is the set of buses and $\mcL_0$ is the set of lines or transformers. We use the subscript $0$ to indicate that this graph corresponds to the basecase power system topology. The set $\mcN$ is partitioned into the set of generators or PV buses $\mcV$; the set of load or PQ buses $\mcP$; and the reference bus. Zero-injection buses are included in $\mcP$. For each $n\in\mcN$, let $v_n=V_ne^{j\theta_n}$ and $s_n=p_n+jq_n$ denote the voltage phasor and complex power injection into bus $n$. Vectors $\bv\in\mathbb{C}^{N}$ and $\bs\in\mathbb{C}^{N}$ collect the voltage phasors and complex power injections for all buses, respectively. 

The notation $\bbv\in\mathbb{C}^{N}$ denotes the complex conjugate of $\bv$, whereas $\tbv$ collects the real/imaginary parts of $\bv$:
\begin{equation*}
\tbv:=\left[\begin{array}{l}
\real(\bv)\\
\imag(\bv)
\end{array}\right]\in\mathbb{R}^{2N}.
\end{equation*}

The notation $\ell=(m,n)$ means that line $\ell\in\mcL$ runs between buses $m$ and $n$. For each line, let $y_\ell$ denote the series admittance and let $y_\ell^s$ denote the half-line shunt admittance. The bus admittance matrix corresponding to the basecase topology can be expressed as
\begin{equation}\label{eq:Y0_construction}
\bY_0=\sum_{\ell=(m,n)\in\mcL_0}
y_\ell \ba_\ell\ba_\ell^\top+
y_\ell^s\be_m\be_m^\top+
y_\ell^s\be_n\be_n^\top,
\end{equation}
where $\be_m$ is the $m$-th column of the identity, and $\ba_\ell:=\be_m-\be_n$. Given $\bY_0$, the complex power injections across all buses satisfy the AC power flow (AC-PF) equations
\begin{equation}\label{eq:pf0}
\bs=\bv\odot\left(\bbY_0\bbv\right)
\end{equation}
where barred symbols denote complex conjugation and $\odot$ denotes entrywise multiplication.

System operators solve the AC-PF problem to compute the power system state. Given two specified quantities for each bus, the AC-PF task amounts to solving the nonlinear equations in \eqref{eq:pf0} to find $\bv$. For each PV bus $n\in\mcV$, we specify $(p_n,V_n)$. For each PQ bus $n\in\mcP$, we specify $(p_n,q_n)$. For the reference bus, we specify $V_0$ and set $\theta_0=0$. Let us stack the specified quantities in vector $\bc\in\mathbb{R}^{2N}$. When an AC-PF solution exists and is unique within the operating region of interest, we can define the mapping from the vector of specifications $\bc$ to the AC-PF solution 
\begin{equation}\label{eq:basecase_pf_map}
G:\mathbb{R}^{2N}\rightarrow\mathbb{R}^{2N},
    \qquad
    \tbv=G(\bc).
\end{equation}

To accomplish various planning and operation tasks, system operators must repeatedly solve the AC-PF problem for numerous scenarios, including diverse loading conditions and generation schedules. To alleviate the computational burden, one obvious approach would be to replace the AC-PF solver with a ML-based surrogate. One can train a deep neural network (DNN) to predict an AC-PF solution once presented with the specification vector at its input. 

One can use a standard feed-forward DNN architecture where the input $\bc$ is propagated through $K$ layers to approximate the related AC-PF solution $G(\bc)$ at the DNN output $F_\bw(\bc)$ by feeding $\bx_0=\bc$:
\begin{subequations}\label{eq:dnn}
\begin{align}
\bz_k&=\bW_k\bx_k+\bb_k,~ k=0,\ldots,K-1\label{eq:dnn:zk}\\
\bx_{k+1} &= \sigma_k\left(\bz_k\right),~ k=0,\ldots,K-1\label{eq:dnn:xk+1}\\
F_{\bw}(\bc) &= \bW_K\bx_K+\bb_K,\label{eq:out}
\end{align}    
\end{subequations}
where $\bW_k$ and $\bb_k$ is the weight matrix and bias vector of layer $k$, and $\sigma_k(\cdot)$ is the vector-valued activation function at layer $k$. Vector $\bw$ collects all the trainable DNN parameters, that is, the vectorized weight matrices and bias vectors across all layers. Mirroring the mapping of the AC-PF solver in \eqref{eq:basecase_pf_map}, let us denote the input/output mapping of the DNN as 
\begin{equation}\label{eq:dnn_map}
F_\bw:\mathbb{R}^{2N}\rightarrow\mathbb{R}^{2N},
    \qquad
    \tbv=F_\bw(\bc).
\end{equation}

To avoid the computational burden of generating a large labeled dataset through repeated AC-PF solves, the DNN is trained in an unsupervised way. Rather than solving \eqref{eq:pf0}, an AC-PF solution can be found by minimizing the squared residual error of the AC-PF equations as~\cite{MLB15,redux} 
\begin{equation}\label{eq:pf0min}
\min_{\tbv \in \mathbb{C}^N}
    \sum_{m=1}^{2N}
    \left(
    {\tbv}^\top
    \bH_m
    {\tbv}
    -
    c_m
    \right)^2, 
\end{equation}
where the symmetric matrix $\bH_m$ corresponding to specification $m$ can be derived from $\bY_0$. An AC-PF solution yields zero cost because it satisfies ${\tbv}^\top\bH_m{\tbv}=c_m$ for all $m$, and so it is a minimizer of \eqref{eq:pf0min}.

Building on \eqref{eq:pf0min}, the DNN can be trained by minimizing the AC-PF residual errors across $S$ specification scenarios $\{\bc^s\}_{s=1}^S$:
\begin{equation*}
    \frac{1}{S}
    \sum_{s=1}^{S}
    \sum_{m=1}^{2N}
    \left(F_\bw^\top(\bc^s)
    \bH_m
    F_\bw(\bc^s)
    -
    c_m^s
    \right)^2,
\end{equation*}
where $c_m^s$ is the $m$-th entry of vector $\bc^s$. In this way, the optimal DNN parameters $\bw^*$ can be found without requiring any AC-PF solutions. Once the DNN parameters have been learned, a single forward pass of the DNN for a given $\bc$ approximates the basecase AC-PF solution map
\begin{equation}\label{eq:dnn_approx_pf_map}
F(\bc):=F_{\bw*}(\bc)\simeq G(\bc).
\end{equation}
To unclutter notation, we will often drop the dependence of $F$ on the optimal DNN parameters. Given the computational effort invested in training this DNN, one might wonder whether it could also infer AC-PF solutions under line contingencies. We next explain how to incorporate this DNN into a fixed-point iteration to infer AC-PF solutions under almost every single-line contingency.

\section{Proposed Methodology}
\label{sec:method}
To ensure grid reliability, operators must evaluate the system state under any single-line contingency. In a system with $L$ lines, contingency analysis entails solving $L$ additional instances of the AC-PF problem. Each instance involves a contingent topology in which a line has been removed from the basecase topology. Under preventive contingency management, the AC-PF specifications remain unaltered between the basecase and the contingent topologies. Under corrective contingency management, however, the operator may change generator setpoints or curtail load to handle contingencies more effectively. Contingency analysis typically ignores the outage of \emph{critical lines}. A line is deemed critical if, once removed, the power system graph becomes disconnected. 

As with the basecase topology, the operator may consider using an ML model to replace AC-PF solvers, thereby alleviating the computational burden associated with AC contingency analysis. Unfortunately, the operator should train $L$ additional ML models, one for each of the $L$ contingent topologies. The offline computational overhead exacerbates further if the operator considers double-line contingencies. Leveraging the structure of the AC-PF problem across the basecase and contingent topologies, we propose an iterative scheme that uses a single DNN trained solely on the basecase topology to predict system states for any single-line contingency.

Let contingency $\ell$ be associated with the outage of line $\ell=(m,n)\in\mcL$. The buses $m$ and $n$, which are adjacent to the outaged line, will be henceforth termed \emph{terminal buses} of contingency $\ell$. Similar to \eqref{eq:pf0}, the AC-PF equations under contingency $\ell$ read as
\begin{equation}\label{eq:pfell}
\bs=\bv\odot\left(\bbY_\ell\bbv\right).
\end{equation}
A key observation is that the bus admittance matrix $\bbY_\ell$ under contingency $\ell$ relates to $\bbY_0$ as
\begin{equation}\label{eq:Yell}
\bY_\ell=\bY_0 - \bDelta_\ell,
\end{equation}
where $\bDelta_\ell:=y_\ell \ba_\ell\ba_\ell^\top+y_\ell^s\be_m\be_m^\top
+y_\ell^s\be_n\be_n^\top$. Plugging \eqref{eq:Yell} into \eqref{eq:pfell} and rearranging provides
\begin{equation}\label{eq:pfell2}
\bs+\cbdelta_\ell(\bv)=\bv\odot\left(\bbY_0\bbv\right),
\end{equation}
where we define the mapping 
\begin{equation}\label{eq:bdelta}
\cbdelta_\ell(\bv) :=\bv\odot\left(\bbDelta_\ell\bbv\right).
\end{equation}
Equation \eqref{eq:pfell2} reveals that the AC-PF equations under contingency $\ell$ can be thought of as the AC-PF equations under the basecase topology with the complex power injections properly modified. It is easy to verify that vector $\bDelta_\ell\bv$ has only two non-zero entries: the entries $m$ and $n$ related to the buses adjacent to the outaged line $\ell$. Due to the entrywise multiplication, vector $\cbdelta_\ell(\bv)$ has non-zero values only at the terminal bus locations $m$ and $n$. Therefore, we modify only the complex power injections at terminal buses for the associated topology. Although sparse, the modification vector $\cbdelta_\ell(\bv)$ depends on the unknown system state $\bv$, so the model in \eqref{eq:pfell2} seems to be of no practical use.

Nonetheless, we can leverage \eqref{eq:pfell2} to develop an iterative scheme that finds the AC-PF solution under contingency $\ell$ using the basecase AC-PF solver:
\begin{equation}\label{eq:pfell2t}
\bs+\cbdelta_\ell(\bv_t)=\bv_{t+1}\odot\left(\bbY_0\bbv_{t+1}\right).
\end{equation}
According to \eqref{eq:pfell2}, each iteration $t$ involves two steps. The first step computes the modified complex power injections on the left-hand side (LHS) of \eqref{eq:pfell2t} using the most recent state estimate $\bv_t$. In the second step, we find $\bv_{t+1}$ as the solution of the basecase AC-PF solver fed with the modified AC-PF specifications. If the iterative scheme converges, its equilibrium coincides with the AC-PF solution under contingency $\ell$ in \eqref{eq:pfell}.

We elaborate on the modified injections at terminal buses. If bus $m\in\mcP$ is a PQ bus, then $\cbdelta_\ell(\bv_t)$ modifies both of its specifications. If bus $m\in\mcV$ is a PV bus, $\cbdelta_\ell(\bv_t)$ modifies only its active power injection, leaving the voltage specification intact. If bus $m$ is the reference, no modification is made. Bus $n$ is handled similarly. 

The iterative scheme in \eqref{eq:pfell2t} can be equivalently expressed in terms of the basecase AC-PF mapping as
\begin{equation}\label{eq:pf0t}
\tbv_{t+1}=G\left(\bc+\bdelta_\ell(\tbv_t)\right),
\end{equation}
where $\bc$ is the original specifications for the AC-PF problem under contingency $\ell$ in \eqref{eq:pfell}, and the vector $\bdelta_\ell(\tbv_t)$ models how these specifications are modified. Specifically, vector $\bdelta_\ell(\tbv_t)$ collects the real part of the entries of $\cbdelta_\ell(\bv_t)$ corresponding to PQ and PV buses; the imaginary part of the entries of $\cbdelta_\ell(\bv_t)$ corresponding to PQ buses; and contains zeros at the entries corresponding to voltage magnitude specifications. Depending on the types of terminal buses, vector $\bdelta_\ell(\tbv_t)$ may have four nonzero entries (PQ/PQ); three (PQ/PV); two (PQ/reference or PV/PV); or one (PV/reference). We will denote the number of nonzero entries of $\bdelta_\ell(\tbv_t)$ by $N_\ell\in\{1,2,3,4\}$.

\begin{figure}[t]
\centering
\includegraphics[width=0.9\columnwidth]{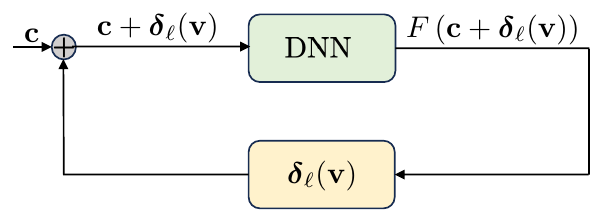}
    \caption{An AC-PF solution under line contingency $\ell$ can be found through a fixed-point iteration including a single DNN in a feedback loop.}
    \label{fig:scheme}
\end{figure}

Inspired by \eqref{eq:pf0t}, we propose predicting AC-PF solutions under line outages by replacing the AC-PF solver with the related DNN. The proposed scheme incorporates  DNN into the next fixed-point iteration:
\begin{equation}\label{eq:Ft}
\tbv_{t+1}=F\left(\bc+\bdelta_\ell(\tbv_t)\right).
\end{equation}
Figure~\ref{fig:scheme} illustrates the scheme. The scheme involves a single DNN, the one trained to solve the basecase AC-PF problem. The advantage of \eqref{eq:Ft} over solving \eqref{eq:pfell} using an AC-PF solver is that the computational complexity can be reduced from $\mcO(N^3)$ to $\mcO(N^2)$ if the DNN depth does not scale with $N$ and the fixed-point iteration in \eqref{eq:Ft} converges fast. We address these two questions numerically through the tests of Section~\ref{sec:tests}.

Before providing numerical evidence, we develop a computational framework to verify whether \eqref{eq:Ft} is convergent. The proposed framework can certify that the iterative scheme converges. It can also determine its region of convergence and the number of iterations required for convergence. 


\section{Convergence Certification of the Fixed-Point Iteration}
\label{sec:certify}
Suppose a system operator has already trained a DNN to predict AC-PF solutions under the basecase topology. Since the DNN parameters $\bw$ are fixed after training, we denote the resulting mapping simply by $\tbv=F(\bc)$. To predict the AC-PF solution under contingency $\ell$, the operator applies the fixed-point iteration shown in \eqref{eq:Ft}. A natural question is whether this iteration converges to a fixed point. This section develops a computational framework for addressing this question. Given the power system model and a trained DNN, the proposed framework can be applied offline to certify the convergence of \eqref{eq:Ft} for each contingency $\ell$. Once convergence has been certified for a particular contingency, the operator can confidently use \eqref{eq:Ft} to predict post-contingency AC-PF solutions.

Although our analysis ignores shunt admittances by setting $y_\ell^s=0$ in \eqref{eq:Y0_construction} for all $\ell$ for simplicity, the tests of Section~\ref{sec:tests} involve the detailed line models. According to the Banach fixed-point theorem, the iterative scheme in \eqref{eq:Ft} converges to a fixed point if the mapping
\begin{equation}\label{eq:H}
H_\ell(\tbv;\bc):=F\left(\bc+\bdelta_\ell(\tbv)\right)  
\end{equation}
satisfies the following two properties across a set $\mcR$ of voltage vectors and a set $\mcC$ of specification vectors $\bc$:
\begin{itemize}
    \item \emph{Self mapping:} If $\tbv\in \mcR$, then $H_\ell(\tbv;\bc)\in\mcR$ for all $\bc\in\mcC$; and
    \item \emph{Contraction:} There exists $L_{H_\ell}<1$ so that 
    \[\|H_\ell(\tbv_1;\bc)-H_\ell(\tbv_2;\bc)\|\leq L_{H_\ell} \|\tbv_1-\tbv_2\|\]
    for all $\tbv_1,\tbv_2\in\mcR$ and all $\bc\in\mcC$.
\end{itemize}

We define sets $\mcR$ and $\mcC$ as the Euclidean balls:
\begin{subequations}\label{eq:balls}
\begin{align}
\mcR & :=\left\{\tbv:\|\tbv-\tbv_0\| \leq R\right\},\label{eq:balls:S}\\
\mcC & :=\left\{\bc:\|\bc-\bc_0\|\leq C\right\}.\label{eq:balls:C}
\end{align}    
\end{subequations}
The voltage center is set to be the so-termed \emph{flat voltage profile} $\tbv_0:=[\bone^\top~\bzero^\top]^\top$, corresponding to an unloaded power system with zero power injections $(\bs_0=\bzero)$. The specification center $\bc_0$ corresponds to this unloaded power system scenario, so that all power injection specifications in $\bc_0$ are set to zero, and all voltage magnitude specifications are set to one. We will assume that the DNN has been trained to perfectly predict the flat voltage profile, i.e., $F(\bc_0)=\tbv_0$.

We first study the self-mapping property of the mapping $H_\ell(\tbv;\bc)$. This property is established in two steps. In the first step, we show that if $\bc\in\mcC$ and $\tbv\in\mcR$, the contingency-corrected specification vector $\bc+\bdelta_\ell(\tbv)$ belongs to a Euclidean ball $\mcC'$ with the same center as $\mcC$ but a larger radius, so that $\mcC\subset \mcC'$. This is formalized in Lemma~\ref{le:Cprime}; all proofs are in the appendix.

\begin{lemma}\label{le:Cprime}
If $\bc\in\mcC$ and $\tbv\in\mcR$, the modified specification vector lies in the Euclidean ball
\begin{equation}\label{eq:Cprime}
\bc+\bdelta_\ell(\tbv)\in\mcC':=\{\bc:\|\bc-\bc_0\|\leq C'\}
\end{equation}
where $C'=C+2\sqrt{2}\bar{y}R(1+R)$ and $\bar{y}:=\max_{\ell}|y_\ell|$.
\end{lemma}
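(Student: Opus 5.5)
By the triangle inequality, $\|\bc+\bdelta_\ell(\tbv)-\bc_0\|\leq \|\bc-\bc_0\|+\|\bdelta_\ell(\tbv)\|\leq C+\|\bdelta_\ell(\tbv)\|$. It therefore suffices to show that $\|\bdelta_\ell(\tbv)\|\leq 2\sqrt{2}\bar y R(1+R)$ for every $\tbv\in\mcR$. Since $\bdelta_\ell(\tbv)$ collects only real and imaginary parts of the entries of $\cbdelta_\ell(\bv)$, possibly dropping some of them, we have $\|\bdelta_\ell(\tbv)\|\leq\|\cbdelta_\ell(\bv)\|$. Here the right-hand side is the Euclidean norm of a complex vector, and it equals the norm of its stacked real and imaginary parts. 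The task thus reduces to bounding $\|\cbdelta_\ell(\bv)\|$.

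With shunts ignored, $\bDelta_\ell=y_\ell\ba_\ell\ba_\ell^\top$, so $\bbDelta_\ell\bbv=\bar y_\ell(\bar v_m-\bar v_n)\ba_\ell$. Hence $\cbdelta_\ell(\bv)$ has entry $\bar y_\ell v_m(\bar v_m-\bar v_n)$ at position $m$, entry $-\bar y_\ell v_n(\bar v_m-\bar v_n)$ at position $n$, and zeros elsewhere. Its norm is therefore
\[
\|\cbdelta_\ell(\bv)\|=|y_\ell|\,|v_m-v_n|\sqrt{|v_m|^2+|v_n|^2}.
\]
Next, I use the ball $\mcR$ centered at the flat profile. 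Write $v_m=1+u_m$ and $v_n=1+u_n$, where $u$ is the deviation from $\tbv_0$; the stacked real vector of $u$ has norm at most $R$. Then $v_m-v_n=u_m-u_n$, and $|u_m-u_n|\leq|u_m|+|u_n|\leq\sqrt{2}\sqrt{|u_m|^2+|u_n|^2}\leq\sqrt{2}R$. Moreover, $\sqrt{|v_m|^2+|v_n|^2}=\|(v_m,v_n)\|\leq\|(1,1)\|+\|(u_m,u_n)\|\leq\sqrt2+R$.

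Combining these gives $\|\cbdelta_\ell(\bv)\|\leq\bar y\sqrt2 R(\sqrt2+R)=\bar y R(2+\sqrt2R)$. This is at most $2\sqrt2\bar yR(1+R)$, because $2\leq 2\sqrt2$ and $\sqrt2\leq2\sqrt2$. The only step requiring care is coupling $u_m$ and $u_n$ through the single constraint $|u_m|^2+|u_n|^2\leq R^2$ rather than bounding each by $R$ separately. The looser separate bounds $|v_m-v_n|\leq 2R$ and $|v_m|,|v_n|\leq 1+R$ give $\|\cbdelta_\ell(\bv)\|\leq 2R\sqrt2(1+R)\bar y$, which also matches the stated constant exactly. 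I would therefore present that simpler route, which is presumably the one intended: bound each entry's magnitude by $\bar y(1+R)\cdot 2R$ and take the norm of two such entries to obtain the factor $\sqrt2$. Plugging this into the first inequality yields $C'=C+2\sqrt2\bar yR(1+R)$.
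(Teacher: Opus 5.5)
Your proof is correct and follows the paper's argument step for step. Both use the triangle inequality, the reduction $\|\bdelta_\ell(\tbv)\|\leq\|\cbdelta_\ell(\bv)\|$, the explicit formula $\|\cbdelta_\ell(\bv)\|=|y_\ell|\,|v_m-v_n|\sqrt{|v_m|^2+|v_n|^2}$, and then the entrywise bounds $|v_m-v_n|\leq 2R$ and $|v_m|,|v_n|\leq 1+R$ from $\|\bv-\bone\|_\infty\leq R$; your coupled estimate $\bar y R(2+\sqrt{2}R)$ is also valid and tighter than the stated constant by a factor of at least $\sqrt{2}$, though the lemma as stated does not need it.
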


Lemma~\ref{le:Cprime} is practically important because it implies that for the iterations in \eqref{eq:Ft} to work properly, the DNN has to be trained over a larger set of specification scenarios $\mcC'$. Building on Lemma~\ref{le:Cprime}, the next lemma establishes the self-mapping property of $H_\ell(\tbv;\bc)$. 

\begin{lemma}\label{le:selfmap}
If the DNN Lipschitz constant $L_G$ satisfies 
\begin{equation}\label{eq:selfmap}
L_G:=\sup_{\bc\in\mcC'}\|\nabla_\bc F(\bc)\|\leq \frac{R}{C'},
\end{equation}
then $H_\ell(\tbv;\bc)\in\mcR$ for all $\tbv\in\mcR$, $\bc\in\mcC$, and $\ell$. 
\end{lemma}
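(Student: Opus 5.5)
The plan is to prove Lemma~\ref{le:selfmap} by centering everything at the flat-profile operating point and combining Lemma~\ref{le:Cprime} with a mean-value (Lipschitz) bound on the DNN.

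First, fix any contingency $\ell$, any $\bc\in\mcC$, and any $\tbv\in\mcR$. Define the modified specification $\bc':=\bc+\bdelta_\ell(\tbv)$. By Lemma~\ref{le:Cprime}, $\bc'\in\mcC'$, so that $\|\bc'-\bc_0\|\leq C'$. This bound is uniform in $\ell$, because $C'$ depends only on $\bar{y}=\max_\ell|y_\ell|$. That uniformity is what lets the conclusion hold for all $\ell$ simultaneously.

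Second, I would use the assumption $F(\bc_0)=\tbv_0$ to write
\[
H_\ell(\tbv;\bc)-\tbv_0=F(\bc')-F(\bc_0).
\]
Since $\mcC'$ is a Euclidean ball, it is convex. Hence the whole segment $\bc_0+\tau(\bc'-\bc_0)$, $\tau\in[0,1]$, lies in $\mcC'$. The DNN in \eqref{eq:dnn} is a composition of affine maps and activations, so I assume activations that are differentiable almost everywhere and Lipschitz (e.g., smooth activations, or ReLU handled via Clarke generalized Jacobians). Along that segment, the fundamental theorem of calculus gives
\[
F(\bc')-F(\bc_0)=\int_0^1 \nabla_\bc F\big(\bc_0+\tau(\bc'-\bc_0)\big)\,(\bc'-\bc_0)\,d\tau.
\]
Taking norms and using the induced operator norm gives $\|F(\bc')-F(\bc_0)\|\leq L_G\|\bc'-\bc_0\|$. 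Here $L_G$ is the supremum in \eqref{eq:selfmap}, which is taken over $\mcC'$ and therefore covers the segment.

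Third, I would chain the bounds:
\[
\|H_\ell(\tbv;\bc)-\tbv_0\|\leq L_G\,C'\leq \frac{R}{C'}\,C'=R,
\]
which is exactly $H_\ell(\tbv;\bc)\in\mcR$.

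The only delicate point is the mean-value step when $F$ is not everywhere differentiable, as with ReLU. I would handle it by noting that a piecewise-affine $F$ restricted to the segment is piecewise affine in $\tau$. On each piece the Jacobian is a genuine gradient at interior points, so the integral identity still holds, with the gradient defined almost everywhere. The bound therefore goes through as long as $L_G$ is interpreted as a bound on all (generalized) Jacobians over $\mcC'$. Everything else is immediate.
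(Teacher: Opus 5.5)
Your proposal is correct and follows the paper's own route. Like the paper, you write $H_\ell(\tbv;\bc)-\tbv_0=F(\bc+\bdelta_\ell(\tbv))-F(\bc_0)$, bound it by $L_G\|\bc+\bdelta_\ell(\tbv)-\bc_0\|$, and invoke Lemma~\ref{le:Cprime} to reach $L_G C'\leq R$; the paper's appendix cites Lemma~\ref{le:selfmap} at that step, which is a typo. You also justify the gradient-to-Lipschitz step via convexity of $\mcC'$, which the paper takes for granted, but your claim that the Jacobian is a genuine gradient on each affine piece can fail if the segment runs along a boundary between linear regions, so the generalized-Jacobian reading of $L_G$ you add at the end is what actually makes that step valid.
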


Lemma~\ref{le:selfmap} establishes that $H_\ell(\tbv;\bc)$ is a self-mapping provided the DNN sensitivity $L_G$ with respect to its input is sufficiently small across $\mcC'$. This result is practically important, as it shows that the DNN must be trained and certified to possess a sufficiently small $L_G$ across the enlarged set $\mcC'$. 

Having established the self-mapping property, we proceed with the contraction property. The Lipschitz constant $L_{H_\ell}$ of the mapping $H_\ell$ can be computed as
\begin{equation}\label{eq:Lh}
L_{H_\ell}=\sup_{\tbv\in\mcR,\bc\in\mcC}\left\| \nabla_{\tbv}H_\ell(\tbv;\bc)\right\|.
\end{equation}
From the chain rule, the Jacobian matrix in \eqref{eq:Lh} can be expressed as the product of two $2N\times 2N$ matrices as
\begin{equation}\label{eq:chain}
\nabla_{\tbv}H_\ell(\tbv;\bc) = \underbrace{\nabla_{\bc+\bdelta_\ell(\tbv)}F\left(\bc+\bdelta_\ell(\tbv)\right)}_{:=\bJ_\ell(\tbv;\bc)}\,
\nabla_{\tbv}\bdelta_\ell(\tbv).
\end{equation}
Recall that vector $\bdelta_\ell(\tbv)$ has only $N_\ell$ nonzero entries with $N_\ell\in\{1,2,3,4\}$, and that those entries depend only on the terminal bus voltages. Therefore, the $2N\times 2N$ Jacobian matrix $\nabla_{\tbv}\bdelta_\ell(\tbv)$ has only $N_\ell$ nonzero rows and 4 nonzero columns. Leveraging this structure, the next lemma provides a factorization of $\nabla_{\tbv}\bdelta_\ell(\tbv)$, which will be helpful for bounding $L_{H_\ell}$.

\begin{lemma}\label{le:Kell}
The Jacobian matrix of the mapping $\bdelta_\ell(\tbv)$ with $\ell=(m,n)$ can be factorized as
\begin{equation}\label{eq:factor}
\nabla_{\tbv}\bdelta_\ell(\tbv)=\bU_\ell \bK_\ell(\tbv) \bV_\ell^\top,
\end{equation}
where $\bV_\ell:=\begin{bmatrix}\be_m & \be_n & \be_{m+N} & \be_{n+N}\end{bmatrix}$; $\bU_{\ell}$ is a $2N\times N_\ell$ matrix formed by keeping $N_\ell$ columns of $\bV_\ell$ depending on the types of buses $m$ and $n$; and $\bK_\ell(\tbv)$ is the $N_\ell\times 4$ Jacobian of the nonzero entries of $\bdelta_\ell(\tbv)$ with respect to terminal bus voltages satisfying
\[\|\bK_{\ell}(\tbv)\|\leq 4|y_\ell|,\quad \forall \tbv\in\mcR.\]
\end{lemma}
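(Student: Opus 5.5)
The plan has two parts: get the factorization \eqref{eq:factor} from the fact that $\bdelta_\ell$ depends on $\tbv$ only through the terminal coordinates, then bound the small Jacobian $\bK_\ell$ with complex (Wirtinger) calculus. Throughout I set $y_\ell^s=0$, as in this section, so $\bDelta_\ell=y_\ell\ba_\ell\ba_\ell^\top$.

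\textbf{Factorization.} With $\bDelta_\ell=y_\ell\ba_\ell\ba_\ell^\top$, definition \eqref{eq:bdelta} gives two nonzero entries of $\cbdelta_\ell(\bv)$:
\[
\check\delta_m=\bar y_\ell\left(|v_m|^2-v_m\bar v_n\right),\qquad
\check\delta_n=\bar y_\ell\left(|v_n|^2-v_n\bar v_m\right),
\]
and all other entries vanish. Both depend only on $\bx:=\bV_\ell^\top\tbv=[\real v_m,\real v_n,\imag v_m,\imag v_n]^\top$. Let $\bU_\ell$ keep the columns of $\bV_\ell$ that index the specification slots modified for the given bus types, as described before \eqref{eq:Ft}. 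Then $\bdelta_\ell(\tbv)=\bU_\ell\, g_\ell(\bV_\ell^\top\tbv)$ for a map $g_\ell:\mathbb{R}^4\to\mathbb{R}^{N_\ell}$. Here I assume the specification vector $\bc$ is indexed consistently with $\tbv$: $p$ or $q$ in slot $n$ or $n+N$. The chain rule then gives \eqref{eq:factor} with $\bK_\ell(\tbv):=\nabla_\bx g_\ell(\bx)$.

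\textbf{Norm bound.} The matrix $\bK_\ell$ is a row-submatrix of the full $4\times4$ real Jacobian $\bK^{\rm full}$ of $(\real\check\delta_m,\real\check\delta_n,\imag\check\delta_m,\imag\check\delta_n)$. Deleting rows cannot increase the spectral norm, so $\|\bK_\ell\|\le\|\bK^{\rm full}\|$, and I will bound the latter.

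The key step is to work in complex coordinates. For a holomorphic-type map $f(v,\bar v)$, the real Jacobian acting on a real increment $\tilde{\mathbf h}$ corresponds to the complex increment
\[
\partial_{\bv} f\,\mathbf h+\partial_{\bar{\bv}} f\,\bar{\mathbf h}.
\]
The realification is an isometry, so for $\|\mathbf h\|=1$ the output norm is at most the sum of the operator norms of the two complex matrices $\partial_{\bv}f$ and $\partial_{\bar{\bv}}f$. The Wirtinger derivatives are
\[
\partial_{v_m}\check\delta_m=\bar y_\ell(\bar v_m-\bar v_n),\quad
\partial_{\bar v_m}\check\delta_m=\bar y_\ell v_m,\quad
\partial_{\bar v_n}\check\delta_m=-\bar y_\ell v_m,\quad
\partial_{v_n}\check\delta_m=0,
\]
and symmetrically for $\check\delta_n$. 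So both complex matrices are $\bar y_\ell$ times $2\times2$ matrices:
\[
\partial_{\bv}=\bar y_\ell\,\mathrm{diag}(\bar v_m-\bar v_n,\ \bar v_n-\bar v_m),\qquad
\partial_{\bar{\bv}}=\bar y_\ell\begin{bmatrix}v_m&-v_m\\-v_n&v_n\end{bmatrix}.
\]
Their norms are $|y_\ell|\,|v_m-v_n|$ and at most $|y_\ell|\sqrt2\,\sqrt{|v_m|^2+|v_n|^2}$, respectively. The second matrix has rank one, equal to $[v_m;-v_n][1,-1]$.

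\textbf{Main obstacle.} The remaining, and hardest, step is to turn these into the clean constant $4|y_\ell|$ using only $\tbv\in\mcR$. Since $\|\tbv-\tbv_0\|\le R$, the terminal voltages satisfy $|v_m-v_n|\le\sqrt2R$ and $\sqrt{|v_m|^2+|v_n|^2}\le\sqrt2+R$. This yields
\[
\|\bK_\ell\|\le|y_\ell|\left(\sqrt2R+\sqrt2(\sqrt2+R)\right)=|y_\ell|\left(2+2\sqrt2R\right),
\]
which is at most $4|y_\ell|$ whenever $R\le 1/\sqrt2$, i.e., in the practical regime where voltages stay near the flat profile. The point I am least sure about is whether the stated constant $4|y_\ell|$ holds for all $R$ or relies on such a restriction. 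I would first attempt to sharpen the constant by exploiting the joint structure of the two blocks rather than the triangle inequality. If that fails, I would state the lemma under the mild assumption $R\le1/\sqrt2$, or as $\|\bK_\ell\|\le 2|y_\ell|(1+\sqrt2R)$, which reduces to $4|y_\ell|$ in that regime.
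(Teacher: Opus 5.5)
Your argument is sound. The factorization step is the same sparsity argument the paper uses; writing $\bdelta_\ell(\tbv)=\bU_\ell\, g_\ell(\bV_\ell^\top\tbv)$ and applying the chain rule is just a cleaner statement of it. The norm bound, however, follows a genuinely different route, and your worry about the constant is justified.

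The paper writes $\bK_\ell(\tbv)=\bA_\ell(\tbv_\ell\otimes\bI_4)$, shows $\|\bA_\ell\|=4|y_\ell|$ and $\|\tbv_\ell\|\le\sqrt{2}(1+R)$, and then asserts $\|\bK_\ell\|\le 4|y_\ell|$. Those two facts only give $4\sqrt{2}(1+R)|y_\ell|$, which exceeds $4|y_\ell|$ for every $R\ge 0$, so the paper's last step does not follow. In fact no $R$-independent constant can hold, because $\bK_\ell$ is linear in $\tbv_\ell$. Concretely, take a PQ/PQ line with $y_\ell=1$, $v_m=1+R$, $v_n=1$, and all other buses flat. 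For the unit vector $\mathbf{h}=(1,-1,0,0)^\top/\sqrt{2}$ in the coordinates $\bV_\ell^\top\tbv$, one gets $\|\bK_\ell\mathbf{h}\|^2=4+4R+5R^2$, which exceeds $16$ once $R>6/5$. So stating the lemma under $R\le 1/\sqrt{2}$, or as $\|\bK_\ell(\tbv)\|\le 2|y_\ell|(1+\sqrt{2}R)$, is the correct version rather than a weakening. It still covers the value $R=0.5$ used in the tests.

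As for what each route buys: the paper's Kronecker form makes the linear dependence on the terminal coordinates explicit, with a fixed coefficient matrix of norm $4|y_\ell|$. But by bounding $\|\tbv_\ell\|$ about the origin, it discards the fact that the Jacobian is small at the flat profile. Your Wirtinger split separates two pieces:
\begin{itemize}
\item the part $\partial_{\bv}$, which vanishes at $\tbv_0$ and has norm $|y_\ell||v_m-v_n|\le\sqrt{2}R|y_\ell|$;
\item the rank-one part $\partial_{\bbv}$, with norm $\sqrt{2}|y_\ell|\sqrt{|v_m|^2+|v_n|^2}$.
\end{itemize}
You then measure both relative to $\tbv_0$, which recovers $2|y_\ell|$ at the flat profile and gives an explicit $R$-dependence.

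The paper's argument could be repaired in the same spirit by centering: $\bK_\ell(\tbv)=\bK_\ell(\tbv_0)+\bA_\ell\big((\tbv_\ell-\bV_\ell^\top\tbv_0)\otimes\bI_4\big)$ with $\|\bK_\ell(\tbv_0)\|\le 2|y_\ell|$. That yields $2|y_\ell|(1+2R)$, i.e., $4|y_\ell|$ only for $R\le 1/2$, which is slightly weaker than your $R\le 1/\sqrt{2}$.
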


Plugging \eqref{eq:chain} and \eqref{eq:factor} into \eqref{eq:Lh} provides
\begin{align}
L_{H_\ell}&=\sup_{\tbv\in\mcR,\bc\in\mcC}\left\| \bJ_\ell(\tbv;\bc)\bU_\ell \bK_\ell(\tbv)\bV_\ell^\top\right\|\nonumber\\
&\leq 4|y_\ell|\,\sup_{\tbv\in\mcR,\bc\in\mcC}\left\| \bJ_\ell(\tbv;\bc)\bU_\ell\right\|\label{eq:bound1}
\end{align}
because of the sub-multiplicative norm property and $\|\bK_\ell(\tbv)\bV_\ell^\top\|=\|\bK_\ell(\tbv)\|$ since $\bV_\ell$ is orthonormal. 

According to \eqref{eq:bound1}, the mapping $H_\ell(\bc+\bdelta_\ell(\tbv))$ is a contraction for all $\tbv\in\mcR$ and $\bc\in\mcC$ if 
\begin{equation}\label{eq:contraction}
L_{G_\ell}:=\sup_{\tbv\in\mcR,\bc\in\mcC}\left\| \bJ_\ell(\tbv;\bc)\bU_\ell\right\|\leq \frac{1}{4|y_\ell|}.
\end{equation}

In summary, the mapping $H_\ell(\bc+\bdelta_\ell(\tbv))$ satisfies the assumptions of the fixed-point theorem if it satisfies \eqref{eq:selfmap} and \eqref{eq:contraction}. The bound in \eqref{eq:selfmap} is on the DNN sensitivity to a general input, while the bound in \eqref{eq:contraction} is on the DNN sensitivity to inputs of particular structure depending on contingency $\ell$. Section~\ref{sec:sdp} explains how to certify whether a given DNN satisfies these bounds by solving SDPs. The numerical tests of Section~\ref{sec:tests} demonstrate that the DNN sensitivities are sufficiently small for the benchmark systems, so that system operators can confidently use the proposed iterative scheme in \eqref{eq:Ft} to predict post-contingency system states.

\section{Bounding DNN Sensitivities via SDPs}\label{sec:sdp}
We present two SDP formulations to bound the Lipschitz constants $L_G$ and $L_{G_\ell}$ for a DNN already trained to predict basecase AC-PF solutions. Computing the exact Lipschitz constant of a DNN is known to be NP-hard~\cite{virmaux2018lipschitz}. We adopt recent works that propose computing upper bounds of DNN Lipschitz constants~\cite{hashemi2021certifying,xu2024eclipse}. Consider the DNN described in \eqref{eq:dnn}. Its Lipschitz constant is upper bounded as $L_G\leq \sqrt{\rho}$, where $\rho$ is the optimal value of the ensuing SDP~\cite{hashemi2021certifying}:
\begin{align}
\min ~&~\rho\label{eq:sdp}\\
    \text{over}~~&~\rho\geq 0, \text{diagonal}~\bLambda_k\succeq 0,~k=0,\ldots,K-1\nonumber\\
    \text{s.to}~~
    & \bM(\bQ_0,\ldots,\bQ_K)\preceq 0.\nonumber
\end{align}    
Matrix $\bM$ is defined as
\begin{align*}
    \bM&:=\sum_{k=0}^{K-1}
    \begin{bmatrix}
        \bW_k\bE_k\\
        \bE_{k+1}
    \end{bmatrix}^{\top}
    \bQ_k
    \begin{bmatrix}
        \bW_k\bE_k\\
        \bE_{k+1}
    \end{bmatrix}\\
    &\quad \quad-
    \begin{bmatrix}
        \bE_0\\
        \bW_K\bE_K
    \end{bmatrix}^{\top}
    \bQ_K
    \begin{bmatrix}
        \bE_0\\
        \bW_K\bE_K
    \end{bmatrix},
\end{align*}
and each $\bE_k$ is a selection matrix so that $\bx_k=\bE_k\bx$ for $k=0,\ldots,K$, if $\bx^\top=\begin{bmatrix}\bx_0^\top& \cdots & \bx_K^\top\end{bmatrix}$. For $k=0,\ldots,K-1$, the matrix $\bQ_k$ is defined as
\begin{equation*}
    \bQ_k:=
    \begin{bmatrix}
        -\bLambda_k\operatorname{dg}(\balpha_k\odot\bbeta_k)
        &
        \frac{1}{2}\bLambda_k\operatorname{dg}(\balpha_k+\bbeta_k)\\
        \frac{1}{2}\bLambda_k\operatorname{dg}(\balpha_k+\bbeta_k)
        &
        -\bLambda_k
    \end{bmatrix},
\end{equation*}
where vectors $\balpha_k$ and $\bbeta_k$ carry lower and upper slope bounds of the activation $\bsigma_k$. Matrix $\bQ_K$ is defined as
\begin{equation*}
    \bQ_K:=
    \begin{bmatrix}
        \rho \bI & 0\\
        0 & -\bI
    \end{bmatrix}.
\end{equation*}

The SDP depends on the DNN weight matrices $\{\bW_k\}_{k=0}^K$, but not the bias vectors $\{\bb_k\}_{k=0}^K$. The Lipschitz bound obtained holds for all inputs $\bc\in\mathbb{R}^{2N}$. If the DNN has ReLUs as activations, the slope bounds are set as $\balpha_k=\bzero$ and $\bbeta_k=\bone$ for all $k$. To improve the conservatism of the bound, subsequent works suggested limiting the ReLU slope bounds if the input lies in a compact set~\cite{hashemi2021certifying,xu2024eclipse}. We briefly summarize the result. If the $i$-th entry in the pre-activation vector $\bz_k$ is bounded as $z_k^i\in[l_k^i,u_k^i]$, the slope bounds for the associated ReLU are
\begin{equation}\label{eq:alpha_beta_three_cases}
(\alpha_k^i,\beta_k^i)
    =
    \begin{cases}
        (0,0), & u_k^i\le 0,\\
        (1,1), & l_k^i\ge 0,\\
        \left(0,\dfrac{u_k^i}{u_k^i-l_k^i}\right),
        & l_k^i<0<u_k^i.
    \end{cases}
\end{equation}
Using tighter slope bounds as $\{(\balpha_k,\bbeta_k)\}_{k=0}^K$ in \eqref{eq:sdp} has been observed to yield significantly tighter bounds on DNN Lipschitz constants~\cite{hashemi2021certifying,xu2024eclipse}. Algorithm~1 of \cite{hashemi2021certifying} provides a computationally-efficient way for computing lower and upper bounds on the entries of $\bz_k$, recursively from the input to the output layer. In our setting, the DNN input lies in the Euclidean ball $C'$. If $\bw_0^i$ is the $i$-th row of $\bW_0$ and $b_0^i$ the $i$-th entry of $\bb_0$, the upper/lower bounds on the $i$-th entry of $\bz_0=\bW_0\bc+\bb_0$ can be found in closed form as $\bc_0^\top\bw_0^i+b_0^i\pm C'\|\bw_0^i\|$. 

Switching to certifying the contraction property, we would like to compute an upper bound on $L_{G_\ell}$ and evaluate the condition in \eqref{eq:contraction}. Different from $L_G$, the constant $L_{G_\ell}$ depends on the partial sensitivity matrix $\nabla_\bc F(\bc)\bU_\ell$, which has only the $N_\ell$ columns of $\nabla_\bc F(\bc)$ associated with contingency $\ell$. Interestingly, constant $L_{G_\ell}$ can be bounded via a neat modification of the SDP. 

We will first modify \eqref{eq:sdp} to compute $L_{G_\ell}$ over $\bc\in\mathbb{R}^{2N}$ for generic slope bounds $\balpha_k=\bzero$ and $\bbeta=\bone$. To this end, partition the contribution of the $N_\ell$ relevant entries of $\bc$ on $\bz_0$ and that of the remaining entries:
\begin{equation}
\bz_0=\bW_0\bP_\ell\bc + \bW_0\bP_\ell^\perp\bc +\bb_0,
\end{equation}
where $\bP_\ell:=\bU_\ell\bU_\ell^\top$ and $\bP_\ell^\perp:=\bI_{2N}-\bP_\ell$. To compute partial sensitivities, we vary $\bU_\ell^\top\bc$ and treat $\bW_0\bP_\ell^\perp\bc +\bb_0$ as a modified bias vector. The original SDP in \eqref{eq:sdp} does not depend on biases. Thus, we can upper bound $\sup_{\bc\in\mathbb{R}^{2N}}\|\nabla_{\bc}F(\bc)\bU_\ell\|$ using a modified SDP where $\bW_0$ has been replaced by $\bW_0\bU_\ell$. As with total sensitivities, this bound may be loose. 

To tighten the bound on $L_{G_\ell}$, we tighten the activation slope bounds in the modified SDP. One option would be to compute upper/lower bounds for each entry of $\bz_0$ over $\bc\in\mcC'$, as we did with total sensitivities. This approach does not utilize the structure $\bc+\bdelta_\ell(\tbv)$ of the DNN input with $\bc\in\mcC$ and $\tbv\in\mcR$. In the proof of Lemma~\ref{le:selfmap}, we showed that if $\tbv\in\mcR$, then $\|\bdelta_\ell(\tbv)\|\leq C_\ell:=2\sqrt{2}|y_\ell|R(1-R)$. Then, the $i$-th entry of $\bz_0$ can be upper bounded by the optimal cost
\begin{align}\label{eq:maxzo}
u_0^i=\max_{\bc,\bdelta_\ell}~&~(\bc+\bdelta_\ell)^\top \bw_0^i  + b_0^i\\
\text{s.to}~&~\|\bc-\bc_0\|\leq C~~\text{and}~~ \|\bdelta_\ell\|\leq C_\ell\nonumber
\end{align}
for $i=1,\ldots,2N$. The problem is separable over $\bc$ and $\bdelta_\ell$. The optimal cost can be found to be
\[u_0^i=\bc_0^\top\bw_0^i+b_0^i + (C+C_\ell)\|\bw_0^i\|.\]
By swapping maximization with minimization, the lower bound on $z_0^i$ can be found as $l_0^i=\bc_0^\top\bw_0^i+b_0^i- (C+C_\ell)\|\bw_0^i\|$.

In a nutshell, to compute an upper bound on $L_G$, we use the SDP in \eqref{eq:sdp} with tightened slope bounds for $\bc\in\mcC'$. To compute an upper bound on $L_{G_\ell}$, we solve \eqref{eq:sdp} for each $\ell$ with $\bW_0$ replaced by $\bW_0\bU_\ell$  and tightened activation slope bounds over $\bc\in\mcC$ and $\|\bdelta_\ell\|\leq C_\ell$. Once the upper bounds on $L_G$ and $L_{G_\ell}$ have been computed, we can certify whether the DNN satisfies conditions \eqref{eq:selfmap} and \eqref{eq:contraction}.

\section{Numerical Tests}
\label{sec:tests}
We evaluated the proposed AC contingency analysis on the IEEE 118-bus system. To assess its scalability, we also evaluated
the proposed method on the larger 6,717-bus Texas7k system,
with results reported in ~\cite{FPM2026}. Among its 186 lines, we considered the 177 non-critical lines as candidate contingencies. We trained a generic fully connected DNN with two hidden layers, each containing 118 neurons, resulting in a number of trainable parameters that scales as $N^2$. We generated 800 training and 200 testing basecase scenarios by perturbing the nominal specifications with zero-mean Gaussian noise. For PQ specifications, the noise standard deviation was $20\%$ during training and $5\%$ during testing. For voltage specifications, it was $0.1$~pu and $0.05$~pu, respectively. The DNN was trained using Adam with a rate of $4\times 10^{-5}$, and training was terminated when the gradient norm fell below $10^{-3}$.

The first test assessed whether the fixed-point iterations satisfy the convergence conditions. Given the trained DNN, we computed the constant $L_G$ using the SDP in \eqref{eq:sdp} and the constants $L_{H_\ell}=4|y_\ell|L_{G_\ell}$ for all $\ell$ using the modified SDP. We set the voltage set radius to $\mcR=0.5$, and determined the specification set radius from the testing set used to train the DNN. We observed that $L_G\leq L_{H_\ell}$ for all $\ell$, so the subsequent analysis relies solely on $L_{H_\ell}$. With the exception of three lines, the contraction condition $L_{H_\ell} \leq 1$ was satisfied. Nonetheless, the fixed-point iteration numerically converged even for those three lines. This is because the contraction condition is only sufficient, and the SDP bounds can be conservative. 

\begin{figure}[t]
\centering
\includegraphics[width=1\linewidth]{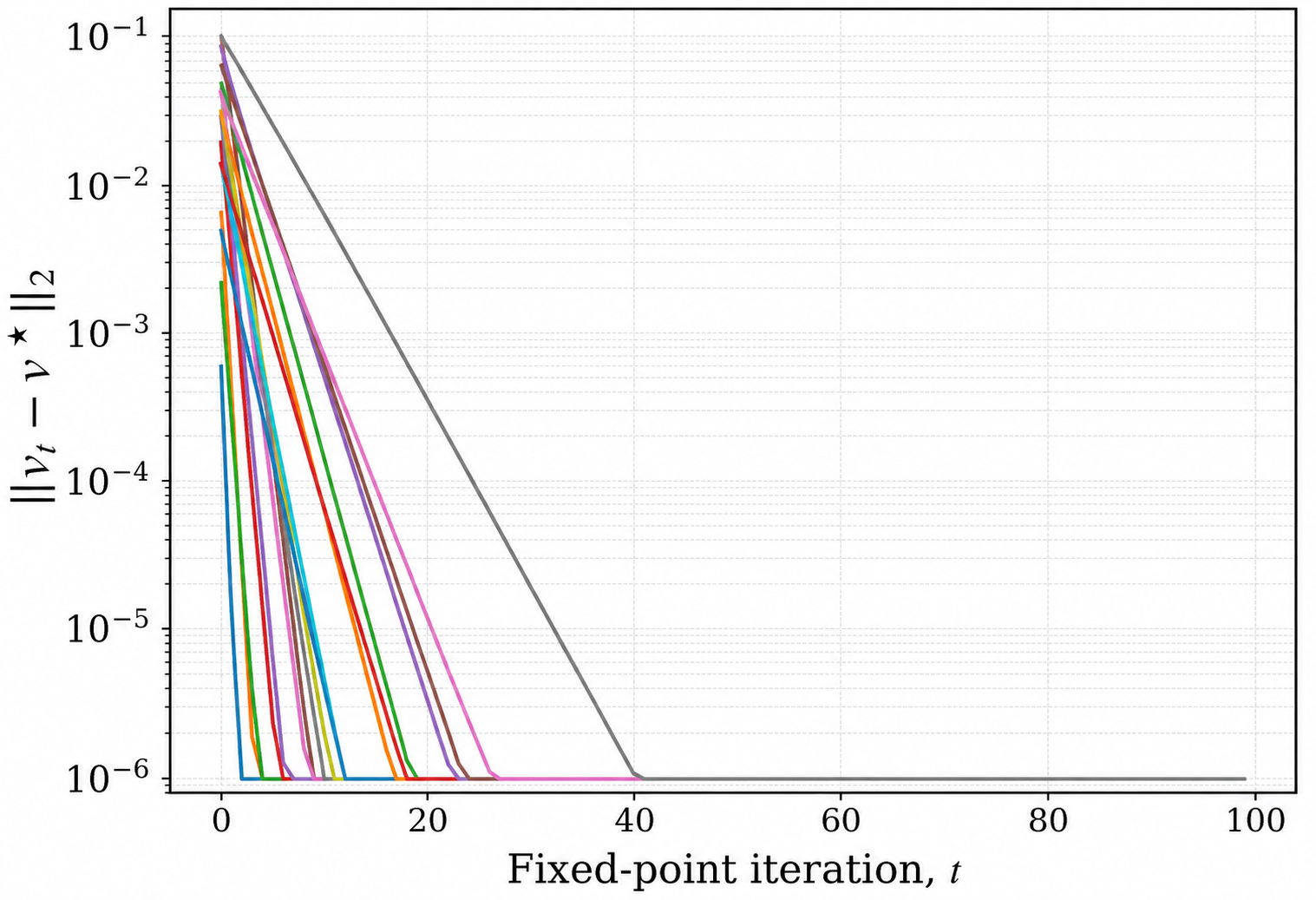}
\caption{Mean Euclidean distance between the fixed-point voltage estimates and their equilibrium for selected contingencies across scenarios. For all lines, the method converges within 50 iterations.}
\label{fig:convergence_selected}
\end{figure}

\begin{figure}[t]
\centering
\includegraphics[width=1\linewidth]{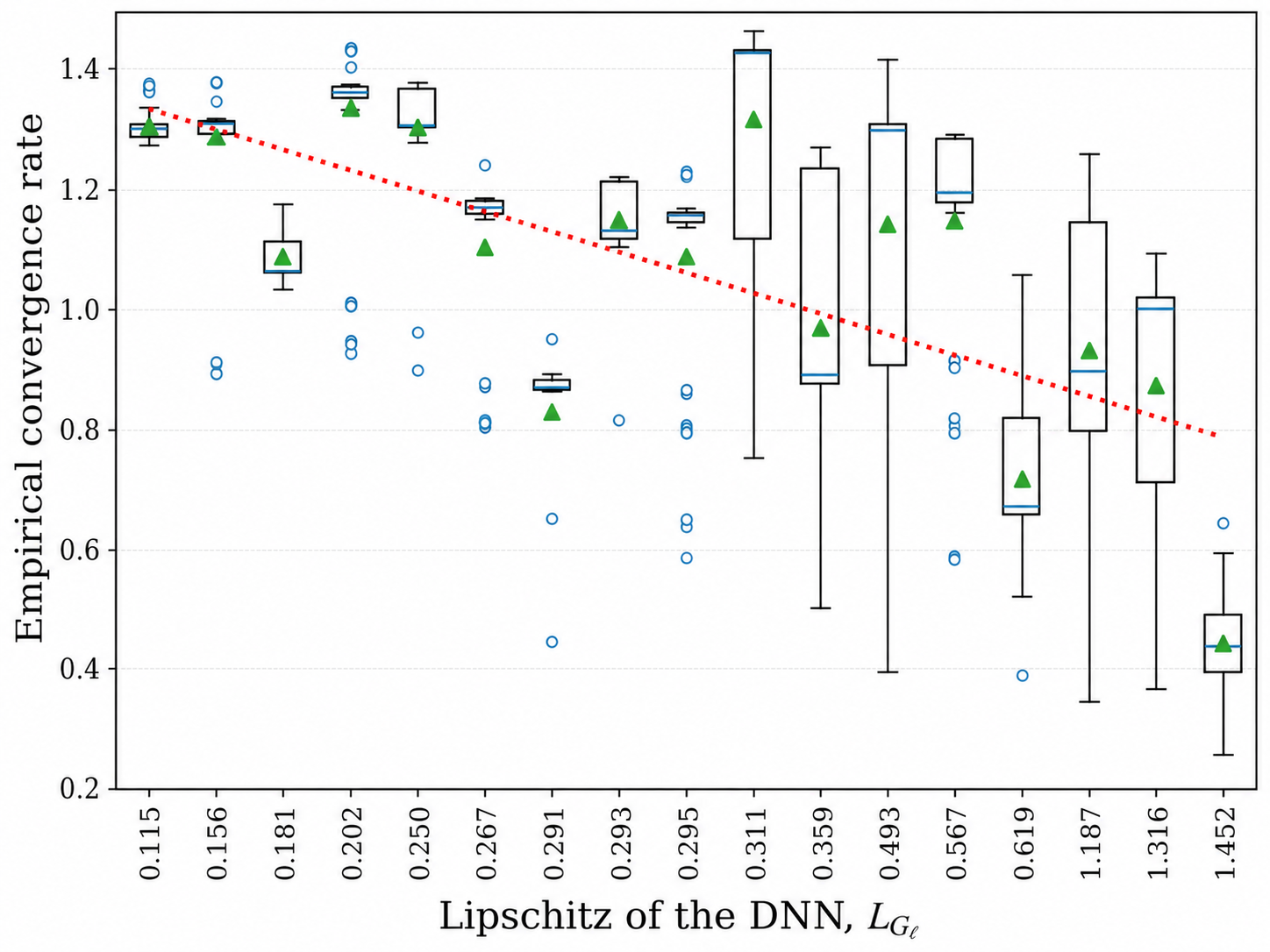}
\caption{Actual convergence rate of the iterative scheme as a function of the Lipschitz constant $L_{H_\ell}$.}
\label{fig:lipschitz_admittance}
\end{figure}

The second test examined the relationship between the empirical convergence rate and the Lipschitz constant $L_{H_\ell}$ across contingencies. Figure~\ref{fig:convergence_selected} shows the Euclidean distance between $\tbv_t$ and the equilibrium point for 17 selected contingencies at ranks $\{1,10,20,\ldots,160,177\}$. The method converges within 50 iterations for all contingencies. Figure~\ref{fig:lipschitz_admittance} plots the mean empirical convergence rate across scenarios for the same contingencies as a function of $L_{H_\ell}$. The convergence rate decreases with the Lipschitz constant, consistent with the contraction analysis in \eqref{eq:contraction}. Variations in the empirical rates arise because the slope is estimated over a finite number of iterations and depends on the initial distance from equilibrium.

\begin{figure}[t]
\centering
\includegraphics[width=1\linewidth]{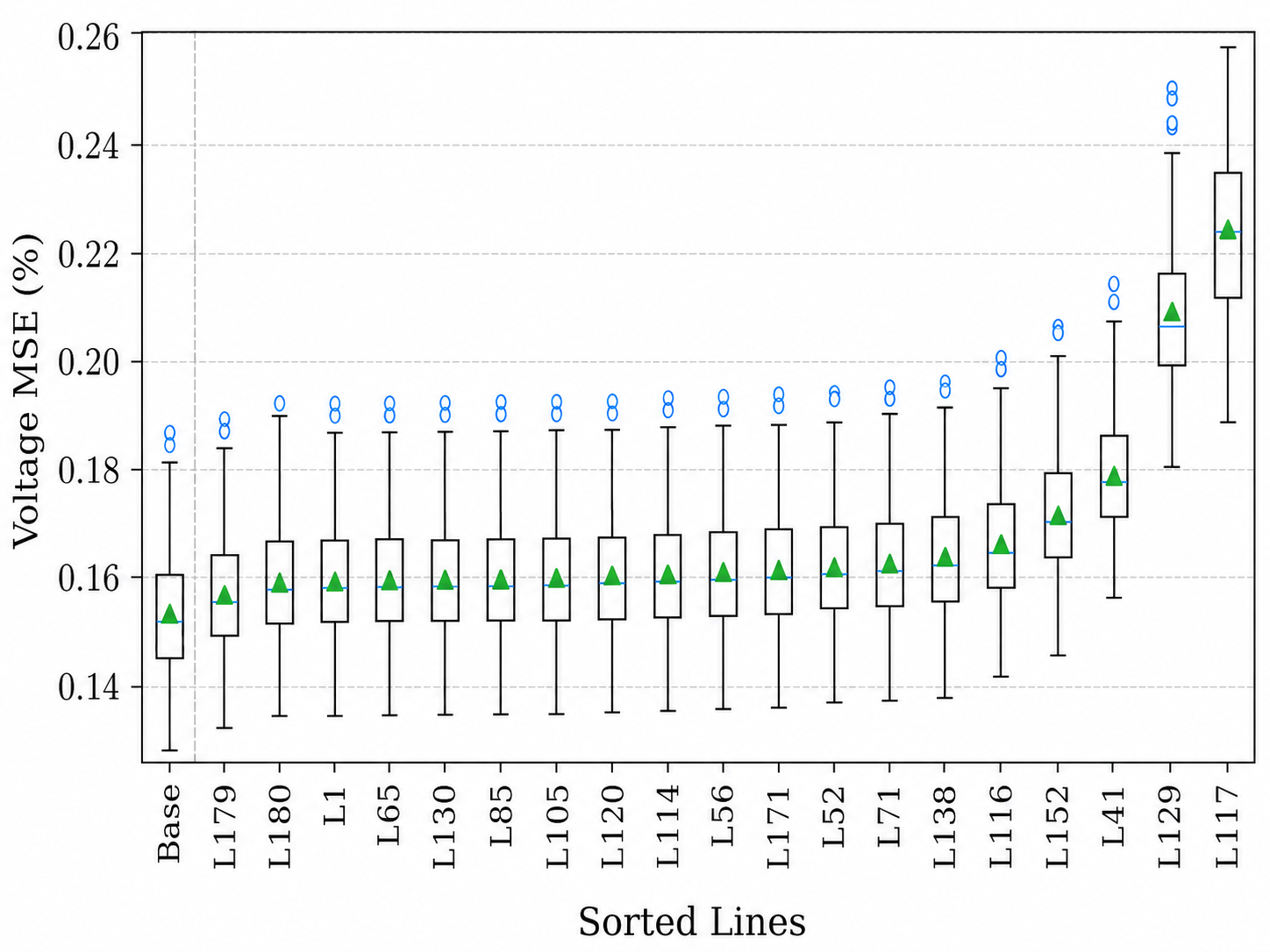}
\vspace*{-2em}
\caption{NMSE between actual and predicted voltages for 17 contingencies.}
\label{fig:voltage_mse_selected}
\end{figure}

\begin{figure}[t]
\centering
\includegraphics[width=1\linewidth]{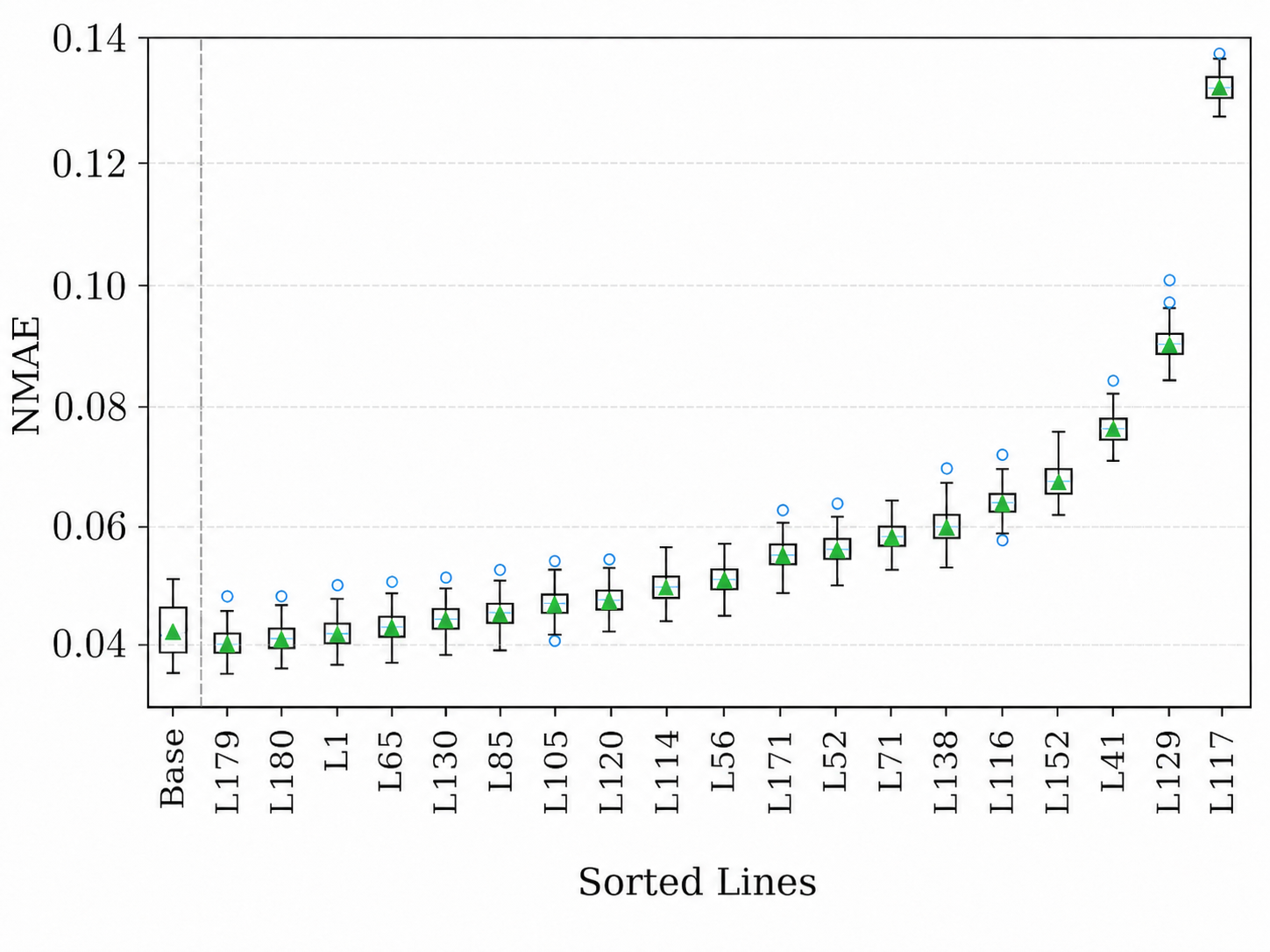}
\vspace*{-2em}
\caption{NMAE between actual and predicted PF specifications across 17 contingencies.}
\label{fig:NMAE}
\end{figure}

The third test assessed the accuracy of the predicted post-contingency states by evaluating the normalized mean squared error (NMSE) between the actual and predicted voltage vectors, and the normalized mean absolute error (NMAE) between the actual AC-PF specifications and those obtained by substituting the predicted voltage states into the AC-PF equations. Figures~\ref{fig:NMAE} and~\ref{fig:voltage_mse_selected} show box plots of these errors for the same selected line contingencies across the 200 testing scenarios, together with the corresponding errors under the basecase topology. For contingencies with smaller Lipschitz values, both metrics remain close to their basecase levels, with noticeable degradation occurring primarily for lines near the end of the Lipschitz ordering.

\begin{table}[t]
\centering
\caption{Runtime comparison over 17,700 instances.}
\label{tab:runtime_comparison}
\begin{tabular}{lcc}
\hline\hline
\textbf{Method} & \textbf{Total (s)} & \textbf{Per instance (ms)}\\
\hline\hline
Proposed method & 3.9369 & 0.2244\\
NR flat & 9.5709 & 0.5407\\
DC model & 0.0103 & 0.0006\\
NR warm start & 8.7270 & 0.4931\\
Jacobian factorization & 4.6770 & 0.2642\\
\hline\hline
\end{tabular}
\vspace*{-1em}
\end{table}

The final test compared the proposed method's computational cost with conventional contingency analysis approaches, including NR with flat and warm starts, Jacobian factorization~\cite{lo2004newton}, and a DC-based solver. We conducted the experiments on a RunPod terminal equipped with an AMD EPYC 9355 32-Core processor and an NVIDIA RTX PRO 6000 Blackwell GPU. We ran the proposed method on the NVIDIA RTX PRO 6000 Blackwell GPU and all other methods on a CPU, since no GPU implementation is available.


Table~\ref{tab:runtime_comparison} compares the total runtime and mean runtime per instance across all contingencies and testing scenarios. Only the DC-based solver achieves lower runtimes than the proposed method, yet at the expense of linearization inaccuracies. Figure~\ref{fig:convergence_selected} explains the computational advantage of the proposed method. With most contingencies converging in about 30 iterations, our method's total complexity is $O(N^2)$ in these tests, compared with $O(N^3)$ per iteration for dense NR. Sparse NR implementations can scale better, typically as $O(N^{1.7})$--$O(N^{2.5})$, while the proposed DNN is highly parallelizable. Its $O(N^2)$ complexity could be reduced further using topology-aware architectures such as GNNs.

\section{Conclusions and Future Work}
\label{sec:conclusion}
This work has proposed a fixed-point formulation that enables a single DNN, trained solely on basecase AC-PF data, to predict AC-PF solutions for arbitrary single-line contingencies. To guarantee convergence of the fixed-point iterations, we established sufficient self-mapping and contraction conditions and developed SDP formulations to certify that the required DNN sensitivity bounds meet these conditions. Numerical tests on the IEEE 118-bus system validate both the convergence and the prediction accuracy of the proposed method, as the SDP-certified bounds satisfy the self-mapping and contraction conditions for almost all tested cases. The results also show an inverse relationship between the convergence rate and the DNN sensitivity. For many contingencies, the prediction errors remain close to the basecase performance, demonstrating that the basecase DNN can be effectively reused in fixed-point iteration. One interesting direction for future work is to extend the formulation to multi-line contingencies and topology reconfiguration. Another key question is whether the exact AC-PF fixed-point map converges. This would clarify whether the observed convergence behavior comes from the AC-PF formulation or from the mapping of the trained DNN.  

\section{Appendix}
\label{sec:appendix}
\begin{proof}[Proof of Lemma~\ref{le:Cprime}]
If $\bc\in\mcC$ and $\tbv\in\mcR$, we get that
\begin{align*}
\|\bc+\bdelta_\ell(\tbv)-\bc_0\|&\leq\|\bc-\bc_0\|+\|\bdelta_\ell(\tbv)\|\nonumber\\
&\leq C + \max_{\tbv\in\mcR}\|\bdelta_\ell(\tbv)\|.
\end{align*}
It suffices to upper bound $\max_{\tbv\in\mcR}\|\bdelta_\ell(\tbv)\|$ for all $\ell$. If we ignore the shunt admittance in \eqref{eq:bdelta}, we get
\begin{equation}\label{eq:bdelta2}
\cbdelta_\ell(\bv)=\bar{y}_\ell \left(\bar{v}_m^-\bar{v}_n\right)\bv\odot (\be_m-\be_n).  
\end{equation}
Recall that $\bdelta_\ell(\tbv)$ is obtained by taking the real values of $\cbdelta_\ell(\bv)$ related to PQ and PV terminal buses, and the imaginary values of $\cbdelta_\ell(\bv)$ related to PQ terminal buses. Because $\bdelta_\ell(\tbv)$ carries a subset or all the values of $\cbdelta_\ell(\bv)$, we get $\|\bdelta_\ell(\tbv)\|\leq \|\cbdelta_\ell(\bv)\|$. From \eqref{eq:bdelta2}, we obtain
\begin{equation}\label{eq:bound2b}
\|\cbdelta_\ell(\bv)\|=|y_\ell|\cdot |v_m-v_n|\cdot \sqrt{|v_m|^2+|v_n|^2}.    
\end{equation}
If $\tbv\in\mcR$, then $\|\bv-\bone\|\leq R$ and so $\|\bv-\bone\|_{\infty}\leq R$. This implies that $|v_m|\leq (1+R)$, $|v_n|\leq (1+R)$, and $|v_m-v_n|\leq 2R$ for all entries of $\bv$. Hence, the norm in \eqref{eq:bound2b} is upper bounded by $2\sqrt{2}|y_\ell|R(1+R)$ for all $\ell$. The radius $C'$ is a universal bound for all $\ell$.
\end{proof}

\begin{proof}[Proof of Lemma~\ref{le:selfmap}]
The goal is to show that $\|H_\ell(\tbv;\bc)-\tbv_0\|\leq R$ for all $\tbv\in\mcR$ and $\bc\in\mcC$. To this end, we have
\begin{align*}
\|H_\ell(\tbv;\bc)-\tbv_0\|
&= \|F(\bc+\bdelta_\ell(\tbv))-F(\bc_0)\|\\
&\leq L_G \, \|\bc+\bdelta_\ell(\tbv)-\bc_0\|\leq L_G \, C'.
\end{align*}
The equality follows from the definition of $H_\ell$ in \eqref{eq:H} and assuming that the DNN has been trained to predict $\tbv_0$ when presented $\bc_0$. The first inequality follows from the DNN Lipschitz constant definition over set $\mcC'$. The second inequality follows from Lemma~\ref{le:selfmap}. The condition in \eqref{eq:selfmap} ensures that $L_G\, C'\leq R$. 
\end{proof}

\begin{proof}[Proof of Lemma~\ref{le:Kell}]
We first consider the case where both terminal buses are PQ buses. In this case, vector $\bdelta_\ell(\tbv)$ has four nonzero entries, corresponding to the two PQ specifications per terminal bus:
\begin{subequations}\label{eq:tbdelta}
\begin{align}
\delta_{\ell,m}(\tbv) &=\real\left\{\bar{y}_\ell v_m(\bar{v}_m-\bar{v}_n)\right\}\label{eq:tbdelta:a}\\
\delta_{\ell,n}(\tbv) &=-\real\left\{\bar{y}_\ell v_n(\bar{v}_m-\bar{v}_n)\right\}\label{eq:tbdelta:b}\\
\delta_{\ell,m+N}(\tbv) &=\imag\left\{\bar{y}_\ell v_m(\bar{v}_m-\bar{v}_n)\right\}\label{eq:tbdelta:c}\\
\delta_{\ell,n+N}(\tbv) &=-\imag\left\{\bar{y}_\ell v_n(\bar{v}_m-\bar{v}_n)\right\}\label{eq:tbdelta:d}.
\end{align}    
\end{subequations}
The four functions are quadratic functions of only four entries of $\tbv$, which can be collected in vector $\tbv_\ell:=[\tilde{v}_m~\tilde{v}_n~\tilde{v}_{m+N}~\tilde{v}_{n+N}]^\top$. 

Given its sparsity pattern, matrix $\nabla_{\tbv}\bdelta_\ell(\tbv)$ can be factorized as in \eqref{eq:factor} with $\bU_\ell=\bV_\ell$ and $\bK_{\ell}(\tbv)$ being the Jacobian of the four quadratic functions. Since the functions are quadratic, the Jacobian is linear in $\tbv_\ell$:
\begin{equation}\label{eq:Kell}
    \bK_\ell(\tbv)=\sum_{i=1}^{4}\tilde{v}_{\ell,i} \bA_{\ell}^i,
\end{equation}
where matrices $\bA_{\ell}^i$ depend on $y_\ell=g_\ell+jb_\ell$ as follows
\begin{align*}
\bA_{\ell}^1&=
\begin{bmatrix}
2g_\ell & -g_\ell & 0 & b_\ell\\
0 & -g_\ell & 0 & -b_\ell\\
-2b_\ell & b_\ell & 0 & g_\ell\\
0 & b_\ell & 0 & -g_\ell
\end{bmatrix},\\
\bA_{\ell}^2&=
\begin{bmatrix}
-g_\ell & 0 & -b_\ell & 0\\
-g_\ell & 2g_\ell & b_\ell & 0\\
b_\ell & 0 & -g_\ell & 0\\
b_\ell & -2b_\ell & g_\ell & 0
\end{bmatrix},\\
\bA_{\ell}^3&=
\begin{bmatrix}
0 & -b_\ell & 2g_\ell & -g_\ell\\
0 & b_\ell & 0 & -g_\ell\\
0 & -g_\ell & -2b_\ell & b_\ell\\
0 & g_\ell & 0 & b_\ell
\end{bmatrix},\\
\bA_{\ell}^4&=
\begin{bmatrix}
b_\ell & 0 & -g_\ell & 0\\
-b_\ell & 0 & -g_\ell & 2g_\ell\\
g_\ell & 0 & b_\ell & 0\\
-g_\ell & 0 & b_\ell & -2b_\ell
\end{bmatrix}.
\end{align*}
Matrix $\bK_{\ell}(\tbv)$ can be alternatively expressed as
\begin{equation}\label{eq:Kell2}
\bK_\ell(\tbv)= \bA_\ell \left(\tbv_\ell\otimes \bI_4\right),
\end{equation}
where $\bA_\ell:=\begin{bmatrix}\bA_{\ell}^1 & \bA_{\ell}^2 & \bA_{\ell}^3 & \bA_{\ell}^4\end{bmatrix}$ and $\otimes$ is the Kronecker product. From the sub-multiplicative property of the spectral norm, it follows that $\|\bK_\ell(\tbv)\|\leq \|\bA_{\ell}\| \cdot \|\tbv_\ell\otimes \bI_4\|=\|\bA_{\ell}\| \cdot \|\tbv_\ell\|$.

Since $\tbv\in \mcR$, we get $\|\bv-\bone\|=\|\tbv-\tbv_0\|\leq R$ and $\|\bv-\bone\|_{\infty}\leq \|\bv-\bone\|\leq R$ so that $|v_m|\leq 1+R$ and $|v_n|\leq 1+R$. It hence follows that $\|\tbv_\ell\|=\|\bv_\ell\|\leq \sqrt{2}\|\bv_\ell\|_{\infty}\leq \sqrt{2}(1+R)$. Regarding $\bA_\ell$, after some algebraic manipulations, it can be shown that the largest eigenvalue of $\bA_\ell\bA_\ell^\top$ is $16|y_\ell|^2$, so that $\|\bA_\ell\|=4|y_\ell|$.

Other combinations of terminal bus types can be handled by selecting a subset of rows from $\bK_\ell(\tbv)$ and by forming $\bU_\ell$ by selecting the corresponding columns from $\bV_\ell$. For example, if $m\in\mcP$ and $n\in\mcV$, the $(n+N)$-th entry of $\bdelta_\ell(\tbv)$ is zero, and so the quadratic function in \eqref{eq:tbdelta:d} is ignored. In this case, we have $N_\ell=3$, $\bU_\ell=\begin{bmatrix}\be_m & \be_n & \be_{m+N}\end{bmatrix}$, and $\bK_\ell(\tbv)$ is obtained from \eqref{eq:Kell} by dropping the fourth row from each $\bA_\ell^i$. Because dropping rows from a matrix cannot increase its spectral norm, the bound $\|\bK_\ell(\tbv)\|\leq 4|y_\ell|$ remains valid. In general, if a terminal bus is a PQ bus, it contributes two rows to $\bK_\ell(\tbv)$; if it is a PV bus, it contributes one row; and if it is the reference bus, it contributes no row to $\bK_\ell(\tbv)$.
\end{proof}

\bibliographystyle{IEEEtran}
\bibliography{myabbrv_1, sample2}

\color{blue}
\color{black}
\section*{Acknowledgments}
This work is supported by US NSF grant 2434502.

\end{document}